    \DeclareMathOperator\GF{GF}
    \DeclareMathOperator\diag{diag}
    \DeclareMathOperator\PG{PG}
    \DeclareMathOperator\Trans{{T}}
    \DeclareMathOperator\Char{Char}
\newcommand{\cP}{{\mathcal P}} %% math calligraphic letters
\newcommand{\cQ}{{\mathcal Q}}
\newcommand{\bP}{{\mathbb P}}
\newcommand{\vG}{{\bm G}}%% math bold letters
\newcommand{\vP}{{\bm P}}
\newcommand{\XX}{\phantom{-}} %% for centering purposes, provides space like a minus sign
\newcommand{\spmatrix}[1]%%
{\mbox{\scriptsize\setlength\arraycolsep{0.5\arraycolsep}$\begin{pmatrix}#1\end{pmatrix}$}}
\newtheorem{lem}{Lemma}
\newtheorem{thm}{Theorem}
{\theoremstyle{definition}

\newtheorem{exa}{Example}
} {\theoremstyle{remark}
\newtheorem{rem}{Remark}
}
\newenvironment{abc}{%
\begin{enumerate}}{\end{enumerate}}
\begin{document}

\title{M\"{o}bius Pairs of Simplices and Commuting Pauli Operators}

%%%%%%%%%%%%%%%%% 1st author %%%%%%%%%%%%%%%%%%%%%
\author{Hans Havlicek$^{*}$ \and Boris Odehnal \and Metod Saniga\thanks{Fellow of the Center for Interdisciplinary Research (ZiF), University of Bielefeld, Germany.}}

\maketitle

\begin{abstract}
There exists a large class of groups of operators acting on Hilbert spaces,
where commutativity of group elements can be expressed in the geometric
language of symplectic polar spaces embedded in the projective spaces PG($n,
p$), $n$ being odd and $p$ a prime. Here, we present a result about commuting
and non-commuting group elements based on the existence of so-called M\"{o}bius
pairs of $n$-simplices, i.~e., pairs of $n$-simplices which are \emph{mutually
inscribed and circumscribed} to each other. For group elements representing an
$n$-simplex there is no element outside the centre which commutes with all of
them. This allows to express the dimension $n$ of the associated polar space in
group theoretic terms. Any M\"{o}bius pair of $n$-simplices according to our
construction corresponds to two disjoint families of group elements (operators)
with the following properties: (i) Any two distinct elements of the same family
do not commute. (ii) Each element of one family commutes with all but one of
the elements from the other family. A three-qubit generalised Pauli group
serves as a non-trivial example to illustrate the theory for $p=2$ and $n=5$.
\par~\par\noindent
\emph{Mathematics Subject Classification (2000):} 51A50 -- 81R05 -- 20F99\\
\emph{PACS Numbers:} 02.10.Ox, 02.40.Dr, 03.65.Ca\\
\emph{Key-words:} M\"{o}bius Pairs of Simplices -- Factor Groups -- Symplectic
Polarity -- Generalised Pauli Groups
\end{abstract}

\section{Introduction}

The last two decades have witnessed a surge of interest in the exploration of
the properties of certain groups relevant for physics in terms of finite
geometries. The main outcome of this initiative was a discovery of a large
family of groups -- Dirac and Pauli groups -- where commutativity of two
distinct elements admits a geometrical interpretation in terms of the
corresponding points being joined by an isotropic line (symplectic polar
spaces, see \cite{huppert-67}, \cite{shaw-95}, \cite{saniga+planat-07a},
\cite{planat+saniga-08a}, \cite{sengupta-09a}, \cite{rau-09a}, \cite{thas-09z},
\cite{thas-09y}, and \cite{havlicek+o+s-09z} for a comprehensive list of
references) or the corresponding unimodular vectors lying on the same free
cyclic submodule (projective lines over modular rings, e.~g.,
\cite{havlicek+saniga-07a}, \cite{havlicek+saniga-08a}). This effort resulted
in our recent paper \cite{havlicek+o+s-09z}, where the theory related to polar
spaces was given the most general formal setting.
\par
Finite geometries in general, and polar spaces in particular, are endowed with
a number of remarkable properties which, in light of the above-mentioned
relations, can be directly translated into group theoretical language. In this
paper, our focus will be on one of them. Namely, we shall consider pairs of
$n$-simplices of an $n$-dimensional projective space ($n$ odd) which are
mutually inscribed and circumscribed to each other. First, the existence of
these so-called M\"{o}bius pairs of $n$-simplices will be derived over an arbitrary
ground field. Then, it will be shown which group theoretical features these
objects entail if restricting to finite fields of prime order $p$. Finally, the
case of three-qubit Pauli group is worked out in detail, in view of also
depicting some distinguished features of the case $p=2$.

\section{M\"{o}bius pairs of simplices}\label{se:moebius}

We consider the $n$-dimensional projective space $\PG(n,F)$ over any field $F$,
where $n\geq 1$ is an odd number. Our first aim is to show an $n$-dimensional
analogue of a classical result by M\"{o}bius \cite{moebius-28a}. Following his
terminology we say that two $n$-simplices of $\PG(n,F)$ are \emph{mutually
inscribed and circumscribed\/} if each point of the first simplex is in a
hyperplane of the second simplex, and \emph{vice versa\/} for the points of the
second simplex. Two such $n$-simplices will be called a \emph{M\"{o}bius pair of
simplices} in $\PG(n,F)$ or shortly a \emph{M\"{o}bius pair}. There is a wealth of
newer and older results about M\"{o}bius pairs in $\PG(3,F)$. See, among others,
\cite{guinand-77a}, \cite[p.~258]{coxeter-89}, \cite{witczynski-94a},
\cite{witczynski-97a}. The possibility to find M\"{o}bius pairs of simplices in any
odd dimension $n\geq 3$ is a straightforward task \cite[p.~188]{brau-76-2}:
Given any $n$-simplex in $\PG(n,F)$ take the image of its hyperplanes under any
null polarity $\pi$ as second simplex. By this approach, it remains open,
though, whether or not the simplices have common vertices. For example, if one
hyperplane of the first simplex is mapped under $\pi$ to one of the vertices of
the first simplex, then the two simplices share a common point. However, a
systematic account of the $n$-dimensional case seems to be missing. A few
results can be found in \cite{berzolari-06} and \cite{herrmann-52a}. There is
also the possibility to find M\"{o}bius pairs which are not linked by a null
polarity. See \cite[p.~137]{berzolari-06} for an example over the real numbers
and \cite[p.~290ff.]{cox-58} for an example over the field with three elements.
Other examples arise from the points of the Klein quadric representing a
\emph{double six} of lines in $\PG(3,F)$. See \cite[p.~31]{hirschfeld-85}

\par
We focus our attention to \emph{non-degenerate\/} M\"{o}bius pairs. These are pairs
of $n$-simplices such that each point of either simplex is incident with
\emph{one and only one\/} hyperplane of the other simplex. This property
implies that each point of either simplex does not belong to any subspace which
is spanned by less than $n$ points of the other simplex, for then it would
belong to at least two distinct hyperplanar faces. We present a construction of
non-degenerate M\"{o}bius pairs which works over any field $F$. The problem of
finding \emph{all\/} M\"{o}bius pairs in $\PG(n,F)$ is not within the scope of this
article.
\par
In what follows we shall be concerned with matrices over $F$ which are composed
of the matrices
\begin{equation}\label{eq:matrizen}
    K:=\begin{pmatrix}
     0& -1\\
     1 & \XX 0
    \end{pmatrix},\;\;
    J:=\begin{pmatrix}
     1& 1\\
     1 & 1
    \end{pmatrix},\;\;
    L:=\begin{pmatrix}
     \XX 1& -1\\
       -1 & \XX 1
    \end{pmatrix},
\end{equation}
and the $2\times 2$ unit matrix $I$. We define a null polarity $\pi$ of
$\PG(n,F)$ in terms of the alternating $(n+1)\times (n+1)$ matrix\footnote{Note
that indices range from $0$ to $n$.}
\begin{equation}\label{eq:A}
    A:=\begin{pmatrix}
     K& -J &  \ldots &-J\\
     J & K  & \ldots &-J\\
     \vdots & \vdots   &  \ddots & \vdots\\
     J & J   & \ldots  & K
    \end{pmatrix}.
\end{equation}
Thus all entries of $A$ above the diagonal are $-1$, whereas those below the
diagonal are $1$. Using the identities $-K^2=I$, $JK-KL=0$, and $JL=0$ it is
easily verified that $A$ is indeed an invertible matrix, because
\begin{equation}\label{eq:A-inv}
    A^{-1}=\begin{pmatrix}
    -K& -L &  \ldots    &  -L\\
     L & -K  & \ldots   &  -L\\
     \vdots & \vdots    &  \ddots & \vdots \\
     L & L   & \ldots   &  -K
    \end{pmatrix}.
\end{equation}
\par
Let $\cP:=\{P_0,P_1,\ldots,P_n\}$ be the $n$-simplex which is determined by the
vectors $e_0,e_1,\ldots,e_n$ of the standard basis of $F^{n+1}$, i.~e.,
\begin{equation}\label{eq:P_j}
    P_j= Fe_j\mbox{~~for all~~}j\in\{0,1,\ldots,n\}.
\end{equation}
The elements of $ F^{n+1}$ are understood as column vectors. We first exhibit
the image of $\cP$ under the null polarity $\pi$.

\begin{lem}\label{lem:1}
Let $S$ be a subspace of $\PG(n,F)$ which is generated by $k+1\geq 2$ distinct
points of the simplex $\cP$. Then the following assertions hold:
\begin{abc}
\item
$S\cap \pi(S)=\emptyset$ if $k$ is odd.

\item
$S\cap \pi(S)$ is a single point, which is in general position to the chosen
points of $\cP$, if $k$ is even.

\end{abc}
\end{lem}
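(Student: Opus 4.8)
The plan is to reduce the assertion to a computation of the radical of the form $A$ restricted to the subspace spanned by the chosen vertices. Write $S=\bP(U)$ with $U=\langle e_{i_0},\dots,e_{i_k}\rangle$, where $i_0<i_1<\dots<i_k$ are the indices of the $k+1$ chosen points of $\cP$. Since $\pi$ is the null polarity attached to $A$, one has $\pi(S)=\bP(U^{\perp})$ with $U^{\perp}=\{y\in F^{n+1}:x^{\mathrm T}Ay=0\ \text{for all}\ x\in U\}$, and therefore $S\cap\pi(S)=\bP(U\cap U^{\perp})$. The space $U\cap U^{\perp}$ is precisely the radical of the restriction of the alternating form to $U$; in the basis $e_{i_0},\dots,e_{i_k}$ it is the kernel of the principal submatrix $A_S$ of $A$ indexed by $i_0,\dots,i_k$. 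Thus (a) amounts to $A_S$ being regular, and (b) to $A_S$ having a one-dimensional kernel.

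The decisive simplification is that $A$ is nothing but the sign matrix $A_{ij}=-1$ for $i<j$, $A_{ij}=1$ for $i>j$, and $A_{ii}=0$, as already recorded after \eqref{eq:A}. Restricting to an increasing index set preserves this order relation, so $A_S$ is again a matrix of exactly the same form, now of size $r:=k+1$. Hence it suffices to analyse a single universal alternating matrix $M_r$ whose entry in position $(a,b)$ is $-1$, $0$, or $1$ according as $a<b$, $a=b$, or $a>b$. To find its kernel I would substitute the partial sums $s_a:=x_1+\dots+x_a$ (with $s_0:=0$) for a vector $x=(x_1,\dots,x_r)^{\mathrm T}$; the $a$-th row of the equation $M_r x=0$ then collapses to $s_{a-1}+s_a=s_r$. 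Solving this recursion yields $s_a=s_r$ for odd $a$ and $s_a=0$ for even $a$, whence $x_a=(-1)^{a-1}s_r$. The closing identity $s_r=s_r$ is an empty condition when $r$ is odd but forces $s_r=0$ when $r$ is even.

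Feeding this back gives both claims. If $k$ is odd then $r$ is even, $M_r$ has trivial kernel, $U\cap U^{\perp}=\{0\}$, and so $S\cap\pi(S)=\emptyset$, which is (a). If $k$ is even then $r$ is odd, the kernel is the line spanned by the coordinate vector $(1,-1,1,\dots,1)$, and $S\cap\pi(S)$ is the single point $F(e_{i_0}-e_{i_1}+e_{i_2}-\dots+e_{i_k})$. Since every coordinate of this vector equals $\pm1$ and hence is nonzero in any field $F$ (including $\Char F=2$), the point lies on no face spanned by a proper subset of $\{P_{i_0},\dots,P_{i_k}\}$, i.e. it is in general position with respect to the chosen points; this is (b).

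I expect the only genuine step to be the kernel computation of $M_r$: the parity dichotomy is exactly the obstruction to inverting an alternating matrix of odd versus even size, and the partial-sum substitution is what makes it explicit while simultaneously exhibiting the radical vector. The translation carried out in the first two paragraphs is routine bookkeeping about null polarities, once the sign-matrix shape of $A$ — and its inheritance by every principal submatrix on an increasing index set — has been noticed.
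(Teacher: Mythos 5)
Your proof is correct, and its overall skeleton coincides with the paper's: identify $S\cap\pi(S)$ with the kernel of the principal submatrix of $A$ on the chosen indices, observe that this submatrix inherits the ``$-1$ above, $1$ below, $0$ on the diagonal'' pattern because the index set is increasing, and then let the parity of its size decide everything. Where you genuinely diverge is in the one nontrivial step, the rank determination. The paper handles the odd-$k$ case by appealing to the explicit block inverse (\ref{eq:A-inv}) (with $n$ replaced by $k$), and the even-$k$ case by combining the general fact that an alternating matrix of odd order is singular with the already-established rank of $A_{k-1}$, after which the kernel vector $(-1,1,\dots,-1)^{\Trans}$ is merely verified. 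Your partial-sum substitution $s_a=x_1+\cdots+x_a$, which collapses each row to $s_{a-1}+s_a=s_r$, treats both parities uniformly, \emph{derives} the kernel vector rather than guessing it, and needs neither the inverse matrix nor the singularity of odd-order alternating forms; it is more elementary and self-contained. What the paper's route buys in exchange is that the formula for $A^{-1}$ is needed anyway in the proof of Theorem~\ref{thm:1} (the columns of $A^{-1}$ furnish the vertices $Q_m$), so citing it here costs nothing in context. Your treatment of general position (all coordinates of the radical vector are $\pm1$, hence nonzero in every characteristic) matches the paper's.
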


\begin{proof}
Suppose that $S$ is the span of the points $P_{j_0}, P_{j_1}, \ldots, P_{j_k}$,
where $0\leq j_0<j_1<\cdots<j_k\leq n$. A point $Y$ is in $S$ if, and only if,
it is represented by a vector $y\in F^{n+1}$ of the form
\begin{equation}\label{}
    y = \sum_{i=0}^k y_{j_i}e_{j_i}\neq 0.
\end{equation}
The rows of $A$ with numbers $j_0, j_1, \ldots, j_k$ comprise the coefficients
of a system of linear equations in $n+1$ unknowns $x_0,x_1,\ldots,x_n$ whose
solutions are the vectors of $\pi(S)$. Substituting the vector $y$ into this
system gives the homogeneous linear system (written in matrix form)
\begin{equation}\label{eq:schnitt}
    A_{k}\cdot (y_{j_0},y_{j_1},\ldots,y_{j_k})^{\Trans} =
    (0,0,\ldots,0)^{\Trans},
\end{equation}
where $A_k$ is the principal submatrix of $A$ which arises from the first $k+1$
rows and columns of $A$. Note that (\ref{eq:schnitt}) holds, because the matrix
$A_k$ coincides with the principal submatrix of $A$ which arises from the rows
and columns with indices $j_0,j_1,\ldots,j_k$. The solutions of
(\ref{eq:schnitt}) are the vectors of $S\cap \pi(S)$. There are two cases:
\par
$k$ odd: Here $A_k$ has full rank $k+1$, as follows by replacing $n$ with $k$
in (\ref{eq:A}) and (\ref{eq:A-inv}). Hence the system (\ref{eq:schnitt}) has
only the zero-solution and $S\cap \pi(S)=\emptyset$, as asserted.
\par
$k$ even: Here $A_k$ cannot be of full rank, as it is an alternating matrix
with an odd number of rows. By the above, the submatrix $A_{k-1}$ has rank $k$,
so that the rank of $A_k$ equals to $k$. This implies that the solutions of the
linear system (\ref{eq:schnitt}) is the span of a single non-zero vector. For
example,
\begin{equation}\label{eq:loesung}
    (-1,1,-1,1,-1,\ldots,-1)^{\Trans} \in F^{k+1}
\end{equation}
is such a vector. It determines the point
%\begin{equation}\label{eq:schnittpunkt}\renewcommand\arraystretch{0.5}
%    \begin{array}{c@{\,}c@{\,}c@{\,}c@{\,}c@{\,}c@{\,}c@{\,}c@{\,}c@{\,}c@{\,}c}
%    P_{{j_0},{j_1},\ldots,{j_k}} := F(\ldots,&1     &,\ldots,&-1    &,\ldots,& 1    &,\ldots,& -1    &,\ldots,& 1    &, \ldots)^{\Trans},\\
%                                             &_{j_0}&        &_{j_1}&        &_{j_2}&        &_{j_3}&        &_{j_k}&
%    \end{array}
%\end{equation}
%where the dots represent zero entries.
\begin{equation}\label{eq:P_neu}
    P_{{j_0},{j_1},\ldots,{j_k}}
    :=
    F\left( \sum_{i=0}^k (-1)^{i+1}e_{j_i}    \right).
\end{equation}
Since the coordinates of $P_{{j_0},{j_1},\ldots,{j_k}}$ with numbers ${j_0},
{j_1}, \ldots, {j_k}$ are non-zero, the points $P_{j_0},P_{j_1},\ldots,P_{j_k},
P_{{j_0},{j_1},\ldots,{j_k}}$ are in general position.
\end{proof}
The previous lemma holds trivially for $k+1=0$ points, since then
$S=\emptyset$. It is also valid, \emph{mutatis mutandis}, in the case $k+1=1$,
even though here one has to take into account $S=P_{j_0}$ yields again the
point $S\cap \pi(S)=P_{j_0}$. Hence the null polarity $\pi$ and the simplex
$\cP$ give rise to the following points: $P_0,P_1\ldots,P_n$ (the points of
$\cP$), $P_{012},P_{013},\ldots,P_{n-2,n-1,n}$ (one point in each plane of
$\cP)$, \ldots, $P_{0,1,\ldots,n-1}, \ldots, P_{1,2,\ldots,n}$ (one point in
each hyperplane of $\cP$). All together these are
\begin{equation}\label{}
    {{n+1}\choose 1} + {{n+1}\choose 3} + \cdots +{{n+1}\choose {n-1}}
    = \sum_{i=0}^{n}{n \choose i} =2^{n}
\end{equation}
mutually distinct points. We introduce another notation by defining
\begin{equation}\label{eq:Q_j}
    P_{{j_0},{j_1},\ldots,{j_k}}=:Q_{{m_0},{m_1},\ldots,{m_{n-k}}},
\end{equation}
where $0\leq {m_0} < {m_1} < \cdots < {m_{n-k}} \leq n$ are those indices which
do not appear in $P_{{j_0},{j_1},\ldots,{j_k}}$. We are now in a position to
state our first main result:

\begin{thm}\label{thm:1}
In $\PG(n,F)$, $n\geq 3$, let the null-polarity $\pi$ and the $n$-simplex
$\cP=\{P_0,P_1,\ldots,P_n\}$ be given according to \emph{(\ref{eq:A})} and
\emph{(\ref{eq:P_j})}, respectively. Then the following assertions hold:

\begin{abc}
\item\label{thm:1a}
$\cP$ and $\cQ:=\{Q_0,Q_1,\ldots,Q_n\}$, where the points $Q_m$ are defined by
\emph{(\ref{eq:Q_j})}, is a non-degenerate M\"{o}bius pair of $n$-simplices.

\item\label{thm:1b}
The $n$-simplices $\cP$ and $\cQ$ are in perspective from a point if, and only
if, $F$ is a field of characteristic two.
\end{abc}
\end{thm}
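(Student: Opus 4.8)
The plan is to extract the faces of $\cQ$ directly from the polarity $\pi$, which turns part (a) into two one-line incidence checks, and then to locate the only possible centre of perspectivity by a short coordinate computation for part (b). Throughout I write $Q_m=Fq_m$ with $q_m$ the representing vector of \eqref{eq:P_neu}, whose $m$-th coordinate is $0$ and whose remaining coordinates are $\pm1$.

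For part (a) I would first note that the coordinate hyperplane $H_m=\{x:x_m=0\}$, the face of $\cP$ opposite $P_m$, is spanned by the $n$ vertices $P_j$ with $j\neq m$. Since $n-1$ is even, Lemma~\ref{lem:1}(b) applies to $S=H_m$ and shows $H_m\cap\pi(H_m)$ is a single point; as $\pi(H_m)$ is itself a single point, it must equal that intersection, so $\pi(H_m)=Q_m$ and, $\pi$ being involutory, $\pi(Q_m)=H_m$. Because $\pi$ is a correlation it carries the faces $H_0,\ldots,H_n$, which are in general position, to the points $Q_0,\ldots,Q_n$ in general position, so $\cQ$ is indeed a simplex. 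The face $G_m$ of $\cQ$ opposite $Q_m$ then satisfies $\pi(G_m)=\bigcap_{m'\neq m}\pi(Q_{m'})=\bigcap_{m'\neq m}H_{m'}=Fe_m=P_m$, whence $G_m=\pi(P_m)$ has equation $\sum_\ell A_{m\ell}x_\ell=0$. The two incidences are now immediate: $Q_m$ lies on $H_{m'}$ exactly when the $m'$-th coordinate of $q_m$ vanishes, i.e.\ iff $m'=m$; and $P_j$ lies on $G_m$ iff $A_{mj}=0$, i.e.\ iff $j=m$, since $A$ has zero diagonal and $\pm1$ off it. Thus every vertex of either simplex lies on precisely one face of the other, which is the non-degenerate M\"{o}bius property.

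For part (b) I would use the correspondence $P_m\leftrightarrow Q_m$, so that a centre is a point $Fz$ with $z\in\langle e_m,q_m\rangle$ for every $m$. Setting $u:=\sum_{i=0}^{n}(-1)^{i+1}e_i$, the explicit shape of $q_m$ gives $\langle e_0,q_0\rangle=\langle e_0,u\rangle$ and $\langle e_n,q_n\rangle=\langle e_n,u\rangle$; these are two distinct lines meeting only in $Fu$, so they already force any centre to be $Fu$, regardless of the characteristic. Writing $u=a_me_m+b_mq_m$ for an interior index $0<m<n$ (one exists because $n\geq 3$) and using $(q_m)_i=u_i$ for $i<m$ but $(q_m)_i=-u_i$ for $i>m$, the coordinates with $i<m$ force $b_m=1$ while those with $i>m$ force $b_m=-1$. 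Hence a centre exists iff $1=-1$, that is iff $\Char F=2$; and conversely, when $\Char F=2$ one checks at once that $u=e_m+q_m$ for every $m$, so $\cP$ and $\cQ$ are in perspective from $Fu$.

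The step I expect to be most delicate is the identification $\pi(H_m)=Q_m$ in part (a): one must argue equality of the polar \emph{point} with the intersection point, for which the single-point clause of Lemma~\ref{lem:1}(b) is essential, rather than settling for the weaker $Q_m\in H_m$. Once the faces of $\cQ$ are known everything else reduces to bookkeeping, and in part (b) the one genuine idea is that the two extreme joins already pin the centre down, after which a single interior vertex produces the characteristic dichotomy.
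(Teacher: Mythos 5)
Your proof is correct, and for the most part it follows the same route as the paper: both arguments identify the hyperplanes of $\cQ$ with the polar hyperplanes $\pi(P_m)$ and read off one family of incidences from the fact that each row of $A$ has exactly one zero entry. The genuine divergence is in how the second half of the non-degeneracy is settled. The paper passes to the basis $b_i=(-1)^i q_i$ and verifies $B^{\Trans}AB=A$, so that $\cQ$ literally plays the role of $\cP$ and the second incidence count follows by symmetry; you instead observe that $q_m$ has a single zero coordinate (in position $m$), so $Q_m$ lies on exactly one coordinate hyperplane of $\cP$. Your check is shorter and avoids the matrix computation, but note that the paper's self-duality is not merely a convenience: it is invoked again in the proof of Theorem~\ref{thm:2} to justify interchanging the roles of the two families of group elements, so if one adopted your argument there one would still want to record the symmetry separately. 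Your derivation of $\pi(H_m)=Q_m$ from the single-point clause of Lemma~\ref{lem:1}~(b) is a correct and slightly more careful justification of a step the paper asserts without comment, and your general-position argument for $\cQ$ being a simplex replaces the paper's remark that the columns of $A^{-1}$ form a basis. In part~(b) both proofs are coordinate computations with the same correspondence $P_m\leftrightarrow Q_m$; the paper intersects consecutive lines $P_jQ_j$ and $P_{j+1}Q_{j+1}$ and compares the resulting points, while you first force the unique candidate centre $Fu$ from the two extreme lines and then test an interior one -- the clash $b_m=1$ versus $b_m=-1$ you obtain is the same sign dichotomy as the paper's observation that $-e_0+q_0$ and $-e_2+q_2$ are proportional only in characteristic two.
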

\begin{proof}
Ad~(a): Choose any index $m\in\{0,1,\ldots,n\}$. Then $Q_m$ is the image under
$\pi$ of the hyperplane $S$ which is spanned by
$P_0,\ldots,P_{m-1},P_{m+1},\ldots,P_n$. The proof of Lemma~\ref{lem:1} shows
how to find a system of linear equations for $Q_m$. Furthermore, formula
(\ref{eq:P_neu}) provides a coordinate vector for $Q_m$. However, such a vector
can be found directly by extracting the $m$-th column of the matrix $A^{-1}$,
viz.\
\begin{equation}\label{eq:q_m}
    \sum_{i=0}^{m-1}(-1)^{i+m+1} e_i + \sum_{k=m+1}^{n}(-1)^{k+m} e_k =: q_m.
\end{equation}
(The vector from (\ref{eq:P_neu}) is $(-1)^m q_m$.) As the columns of $A^{-1}$
form a basis of $F^{n+1}$, the point set $\cQ$ is an $n$-simplex.
\par
The $n+1$ hyperplanes of the simplex $\cQ$ are the images under $\pi$ of the
vertices of $\cP$. Each of these hyperplanes has a linear equation whose
coefficients comprise one of the rows of the matrix $A$. So a point $P_j$ is
incident with the hyperplane $\pi(P_i)$ if, and only if, the $(i,j)$-entry of
$A$ is zero. Since each row of $A$ has precisely one zero entry, we obtain that
each point of $\cP$ is incident with one and only one hyperplane of $\cQ$.
\par
In order to show that each point of $\cQ$ is incident with precisely one
hyperplane of the simplex $\cP$, we apply a change of coordinates from the
standard basis $e_0,e_1,\ldots,e_n$ to the basis $b_i:=(-1)^{i}q_i$,
$i\in\{0,1,\ldots,n\}$. The points $Fb_i$ constitute the $n$-simplex $\cQ$. Let
$B$ be the matrix with columns $b_0,b_1,\ldots,b_n$. With respect to the basis
$b_i$ the columns of $B^{-1}$ describe the points of $\cP$, and
$B^{\Trans}AB=A$ is a matrix for $\pi$. The columns of $B^{-1}$ and $A^{-1}$
are identical up to an irrelevant change of signs in columns with odd indices.
Therefore, with respect the basis $b_i$, the simplex $\cQ$ plays the role of
$\cP$, and \emph{vice versa}. So the assertion follows from the result in the
preceding paragraph.
\par
Ad~(b): For each $j\in\{0,2,\ldots,n-1\}$ the lines $P_jQ_j$ and
$P_{j+1}Q_{j+1}$ meet at that point which is given by the vector
\begin{equation}\label{eq:punkt}
    -e_j + q_j = -(e_{j+1} + q_{j+1})
    =(-1,1,\ldots,-1,1,\underbrace{-1,-1}_{j,\,j+1},1,-1,\ldots,1,-1)^{\Trans} .
\end{equation}
Comparing signs we see that $-e_0+q_0$ and $-e_2+q_2$ are linearly independent
for $\Char K \neq 2$, whereas for $\Char K = 2$ all lines $P_kQ_k$,
$k\in\{0,1,\ldots,n\}$ concur at the point
\begin{equation}\label{eq:C}
    C:= F(1,1,\ldots,1)^{\Trans} .
\end{equation}
\end{proof}

\begin{rem}\label{rem:1}
Choose $k+1$ distinct vertices of $\cP$, where $3\leq k\leq n$ is odd. Up to a
change of indices it is enough to consider the $k$-simplex
$\{P_0,P_1,\ldots,P_k\}$ and its span, say $S$. The null polarity $\pi$ induces
a null polarity $\pi_S$ in $S$ which assigns to $X\in S$ the
$(k-1)$-dimensional subspace $\pi(X)\cap S$. We get within $S$ the settings of
Theorem~\ref{thm:1} with $k$ rather than $n$ points and $\pi_S$ instead of
$\pi$. The \emph{nested\/} non-degenerate M\"{o}bius pair in $S$ is formed by the
$k$-simplex $\{P_0,P_1,\ldots,P_k\}$ and the $k$-simplex comprising the points
\begin{equation}\label{eq:nest}
     Q_{0,k+1,\ldots,n},\; %% = P_{1,2,3,\ldots,k}
     Q_{1,k+1,\ldots,n},\; %% = P_{0,2,3,\ldots,k}
        \ldots,\;
     Q_{k,k+1,\ldots,n}. %% = P_{0,1,2,\ldots,k-2,}
\end{equation}
This observation illustrates the meaning of all the $2^n$ points which arise
from $\cP$ and the null polarity $\pi$. If we allow $k=1$ in the previous
discussion then, to within a change of indices, the \emph{nested\/} degenerate
M\"{o}bius pair $\{P_0,P_1\}=\{Q_{0,2,\ldots,n}, Q_{1,2,\ldots,n}\}$ is obtained.
\end{rem}

\begin{rem}\label{rem:2}
The case $F=\GF(2)$ deserves particular mention, since we can give an
interpretation for \emph{all points\/} of $\PG(n,2)$ in terms of our M\"{o}bius
pair. Recall the following notion from the theory of binary codes: The
\emph{weight\/} of an element of $\GF(2)^{n+1}$ is the number of $1$s amongst
its coordinates. The $2^n$ points addressed in Remark~\ref{rem:1} are given by
the vectors with \emph{odd weight}. The $2^n$ vectors with \emph{even weight}
are, apart from the zero vector, precisely those vectors which yield the
$2^{n}-1$ points of the hyperplane $\pi(C): \sum_{i=0}^{n+1}x_i=0$. More
precisely, the vectors with even weight $w\geq 4$ are the centres of
perspectivity for the nested non-degenerate M\"{o}bius pairs of $(w-1)$-simplices,
whereas the vectors with weight $2$ are the points of intersection of the edges
of $\cP$ with $\pi(C)$. The latter points may be regarded as ``centres of
perspectivity'' for the degenerate M\"{o}bius pairs formed by the two vertices of
$\cP$ on such an edge. Each point of the hyperplane $\pi(C)$ is the centre of
perspectivity of precisely one nested M\"{o}bius pair.

\end{rem}

\section{Commuting and non-commuting elements}\label{se:komm}

Our aim is to translate the properties of M\"{o}bius pairs into properties of
commuting and non-commuting group elements. We shortly recall some results from
\cite{havlicek+o+s-09z}. Let $(\vG,\cdot)$ be a group and $p$ be a prime.
Suppose that the centre $Z(\vG)$ of $\vG$ contains the commutator subgroup
$\vG'=[\vG,\vG]$ and the set $\vG^{(p)}$ of $p$th powers. Also, let $\vG'$ be
of order $p$. Then $V:=\vG/Z(\vG)$ is a commutative group which, if written
additively, is a vector space over $\GF(p)$ in a natural way. Furthermore,
given any generator $g$ of $\vG'$ we have a bijection $\psi_g : \vG'\to\GF(p):
g^m \mapsto m$ for all $m\in\{0,1,\ldots,p\}$. The commutator function in $G$
assigns to each pair $(x,y)\in \vG\times \vG$ the element
$[x,y]=xyx^{-1}y^{-1}$. It gives rise to the non-degenerate alternating
bilinear form
\begin{equation}\label{eq:blf}
    [\cdot,\cdot]_g : V\times V\to \GF(p) : (xZ(\vG),yZ(\vG)) \mapsto
    \psi_g([x,y]).
\end{equation}
We assume now that $V$ has finite dimension $n+1$, and we consider the
projective space $\PG(n,p):=\bP(V)$. Its points are the one-dimensional
subspaces of $V$. In our group theoretic setting a non-zero vector of $V$ is a
coset $xZ(\vG)$ with $x\in\vG\setminus Z(\vG)$. The scalar multiples of
$xZ(\vG)$ are the cosets of the form $x^k Z(\vG)$, $k\in\{0,1,\ldots,p-1\}$,
because multiplying a vector of $V$ by $k\in\GF(p)$ means taking a $k$th power
in $\vG/Z(\vG)$. So $x,x'\in\vG$ describe the same point $X$ of $\PG(n,p)$ if,
and only if, none of them is in the centre of $\vG$, and $x'=x^k z$ for some
$k\in\{1,2,\ldots,p-1\}$ and some $z\in Z(\vG)$. Under these circumstances $x$
(and likewise $x'$) is said to \emph{represent\/} the point $X$. Conversely,
the point $X$ is said to \emph{correspond\/} to $x$ (and likewise $x'$). Note
that the elements of $Z(\vG)$ determine the zero vector of $V$. So they do not
represent any point of $\PG(n,p)$.
\par
The non-degenerate alternating bilinear form from (\ref{eq:blf}) determines a
null polarity $\pi$ in $\PG(n,p)$. We quote the following result from
\cite[Theorem~6]{havlicek+o+s-09z}: \emph{Two elements $x,y\in \vG\setminus
Z(\vG)$ commute if, and only if, their corresponding points in $\PG(n,p)$ are
conjugate with respect to $\pi$, i.~e., one of the points is in the polar
hyperplane of the other point.} This crucial property is the key for proving
Lemma~\ref{lem:2} and Theorem~\ref{thm:2} below.

\begin{lem}\label{lem:2}
Suppose that $x_0,x_1,\ldots,x_r\in\vG\setminus Z(\vG)$ is a family of group
elements. Then the following assertions are equivalent:
\begin{abc}
\item\label{lem:2a}
The points corresponding to $x_0,x_1,\ldots,x_r$ constitute an $n$-simplex of
the projective space $\PG(n,p)$, whence $r=n$.

\item\label{lem:2b}
There exists no element in $\vG\setminus Z(\vG)$ which commutes with all of
$x_0,x_1,\ldots,x_r$, but for each proper subfamily of $x_0,x_1,\ldots,x_r$ at
least one such element exists.
\end{abc}
\end{lem}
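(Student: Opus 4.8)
The plan is to exploit the crucial result quoted from \cite{havlicek+o+s-09z}: two elements of $\vG\setminus Z(\vG)$ commute if and only if their corresponding points in $\PG(n,p)$ are conjugate with respect to the null polarity $\pi$, i.e. one lies in the polar hyperplane of the other. This dictionary lets me translate the entire statement into the language of projective geometry, where both directions become statements about simplices and polarities. Throughout I write $X_0,X_1,\ldots,X_r$ for the points corresponding to $x_0,x_1,\ldots,x_r$.

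First I would establish the equivalence in geometric terms, namely that the following are equivalent: (a$'$) the points $X_0,\ldots,X_r$ form an $n$-simplex (so $r=n$); (b$'$) there is no point $Y$ conjugate to all of $X_0,\ldots,X_r$, yet for every proper subfamily such a $Y$ exists. The direction (a$'$)$\Rightarrow$(b$'$) is the heart of the matter. If $X_0,\ldots,X_n$ is an $n$-simplex, a point $Y$ conjugate to all of them would lie in every polar hyperplane $\pi(X_i)$; but the $\pi(X_i)$ are $n+1$ hyperplanes in general position (since $\pi$ is a bijective collineation-type map sending a simplex to a simplex, as used in Theorem~\ref{thm:1}), so their intersection is empty, giving the non-existence. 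For a \emph{proper} subfamily, omitting say $X_n$, the intersection $\bigcap_{i=0}^{n-1}\pi(X_i)$ of $n$ hyperplanes in general position is a single point $Y$, which is necessarily conjugate to the $n$ chosen points; one must only check $Y\neq 0$, i.e. that $Y$ genuinely corresponds to some element of $\vG\setminus Z(\vG)$, which holds because a point of $\PG(n,p)$ is by definition nonzero.

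For the converse (b$'$)$\Rightarrow$(a$'$) I would argue contrapositively. Suppose $X_0,\ldots,X_r$ do not form an $n$-simplex. Either they are linearly dependent, or they are independent but $r<n$. In the dependent case I would show that the non-existence clause in (b) fails or the subfamily clause fails: if the points span a subspace of projective dimension $d<r$, then some point among them lies in the span of the others, and a short rank argument on the associated alternating form produces either a common conjugate point for the whole family or destroys the required property for a proper subfamily, contradicting (b). In the independent-but-deficient case $r<n$, the polar hyperplanes $\pi(X_0),\ldots,\pi(X_r)$ are $r+1\le n$ hyperplanes, whose intersection has projective dimension $n-r-1\ge 0$ and hence contains a point $Y$ conjugate to the whole family, again contradicting the non-existence clause of (b).

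The main obstacle I anticipate is the careful bookkeeping in the degenerate (linearly dependent) case of the converse: one must relate linear dependence of the points to the behaviour of the restricted alternating form and rule out the coincidence whereby a point of the family happens to coincide with the sought common conjugate. The non-degeneracy of the form from (\ref{eq:blf}) is exactly what prevents pathologies, and I expect Lemma~\ref{lem:1}, with its clean parity analysis of $S\cap\pi(S)$, to be the right tool to control the dimensions of these polar intersections rather than arguing from scratch. Once the geometric equivalence is secured, the passage back to the group via the commuting/conjugacy dictionary is immediate, since ``commutes with all of $x_0,\ldots,x_r$'' is literally ``conjugate to all of $X_0,\ldots,X_r$'' and ``element of $\vG\setminus Z(\vG)$'' is literally ``point of $\PG(n,p)$''.
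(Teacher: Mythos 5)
Your proposal is correct and follows essentially the same route as the paper: both translate the statement via the commuting/conjugacy dictionary from \cite{havlicek+o+s-09z} and rest on the fact that the common conjugates of a family of points form the polar subspace of their span, which is empty precisely when the family generates $\PG(n,p)$. The paper compresses your entire case analysis (including the somewhat vaguely handled linearly dependent case of your converse, for which Lemma~\ref{lem:1} is not really the needed tool) into the single observation that an $n$-simplex is exactly a \emph{minimal} generating family of $\PG(n,p)$, so that the first clause of (b) encodes ``generating'' and the second clause encodes ``minimal''.
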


\begin{proof}
The points corresponding to the family $(x_i)$ generate $\PG(n,p)$ if, and only
if, their polar hyperplanes have no point in common. This in turn is equivalent
to the non-existence of an element in $\vG\setminus Z(\vG)$ which commutes with
all elements of the family $(x_i)$. The proof is now immediate from the
following observation: An $n$-simplex of $\PG(n,p)$ can be characterised as
being a minimal generating family of $\PG(n,p)$.
\end{proof}
This result shows that the dimension $n+1$ of $V$ can be easily determined by
counting the cardinality of a family of group elements which satisfies
condition (\ref{lem:2b}). We close this section by translating
Theorem~\ref{thm:1}:

\begin{thm}\label{thm:2}
Let $\vG$ be a group which satisfies the assumptions stated in the first
paragraph of this section. Also, let $V=\vG/Z(\vG)$ be an $(n+1)$-dimensional
vector space over $\GF(p)$. Suppose that $x_0,x_1,\ldots,x_n\in\vG\setminus
Z(\vG)$ and $y_0,y_1,\ldots,y_n\in\vG\setminus Z(\vG)$ are two families of
group elements which represent a non-degenerate M\"{o}bius pair of\/ $\PG(n,p)$ as
in Theorem~\emph{\ref{thm:1}}. Then the following assertions hold:
\begin{abc}

\item\label{thm:2a}
There exists no element in $\vG\setminus Z(\vG)$ which commutes with
$x_0,x_1,\ldots,x_n$.

\item\label{thm:2b}
The elements $x_0,x_1,\ldots,x_n$ are mutually non-commuting.

\item\label{thm:2c}
For each $i\in\{1,2,\ldots,n\}$ the element $x_i$ commutes with all $y_j$ such
that $j\neq i$.
\end{abc}
Each of these three assertions remains true when changing the role of the
elements $x_0,x_1,\ldots,x_n$ and $y_0,y_1,\ldots,y_n$.
\end{thm}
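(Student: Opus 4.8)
The plan is to translate each geometric assertion of Theorem~\ref{thm:1} into its group-theoretic counterpart via the dictionary established at the start of Section~\ref{se:komm}. The central fact I would invoke repeatedly is the quoted result from \cite{havlicek+o+s-09z}: two elements $x,y\in\vG\setminus Z(\vG)$ commute if, and only if, their corresponding points in $\PG(n,p)$ are conjugate with respect to the null polarity $\pi$ determined by the commutator form. Since by hypothesis the points corresponding to $x_0,\ldots,x_n$ and those corresponding to $y_0,\ldots,y_n$ form a non-degenerate M\"{o}bius pair as in Theorem~\ref{thm:1}, every incidence statement about $\cP$, $\cQ$ and $\pi$ is directly available.

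For part~(\ref{thm:2a}), I would argue that the points corresponding to $x_0,\ldots,x_n$ form an $n$-simplex, hence a minimal generating family of $\PG(n,p)$; by Lemma~\ref{lem:2} the equivalence of (\ref{lem:2a}) and (\ref{lem:2b}) yields at once that no element of $\vG\setminus Z(\vG)$ commutes with all of them. For part~(\ref{thm:2b}), the vertices of a simplex are in general position, so no vertex lies in the polar hyperplane of another; by the commutation criterion this means the corresponding $x_i$ pairwise fail to commute. For part~(\ref{thm:2c}), the defining property of a M\"{o}bius pair is that each point $P_i$ of $\cP$ lies on exactly one hyperplane of $\cQ$, and since the hyperplanes of $\cQ$ are the polars $\pi(Q_j)$, the point $P_i$ is conjugate to $Q_j$ for all $j$ except the single index where the incidence occurs. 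Tracking the indexing in Theorem~\ref{thm:1}(a), where the $(i,j)$-entry of $A$ vanishes precisely on the diagonal, shows that the one exceptional $j$ is $j=i$; hence $x_i$ commutes with every $y_j$ for $j\neq i$.

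The symmetry clause follows because the M\"{o}bius property is itself symmetric: Theorem~\ref{thm:1}(a) establishes that each point of $\cQ$ is likewise incident with exactly one hyperplane of $\cP$, so interchanging the two families reproduces the same hypotheses and all three conclusions transfer verbatim.

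The main obstacle I anticipate is not conceptual but bookkeeping: pinning down exactly which $y_j$ is the non-commuting partner of each $x_i$ in part~(\ref{thm:2c}). This requires reading off the unique zero in each row of $A$ and confirming it sits on the diagonal, so that the exceptional index matches $i$ under the chosen labelling $Q_m$ from (\ref{eq:Q_j}); a mismatch in indexing conventions between $P_{j_0,\ldots,j_k}$ and $Q_{m_0,\ldots,m_{n-k}}$ could easily shift the exceptional index, so I would verify this carefully against the definition of the polarity rather than trusting the diagonal form of $A$ blindly.
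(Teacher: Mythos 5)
Your overall strategy --- translate everything through the commutation criterion from \cite{havlicek+o+s-09z} and Lemma~\ref{lem:2} --- is exactly the paper's, and part~(\ref{thm:2a}) and the symmetry clause are handled the same way. But your justification of part~(\ref{thm:2b}) has a genuine gap. You argue that ``the vertices of a simplex are in general position, so no vertex lies in the polar hyperplane of another.'' General position does not imply pairwise non-conjugacy under a null polarity: nothing prevents a vertex of a simplex from lying in the polar hyperplane of a different vertex. A concrete counterexample sits inside the paper's own Example~\ref{exa:1}: with respect to the basis (\ref{eq:basis}) the form has matrix $\diag(K,K,K)$, so the simplex points $x00$ and $0x0$ are conjugate (the operators $\sigma_x\otimes\sigma_0\otimes\sigma_0$ and $\sigma_0\otimes\sigma_x\otimes\sigma_0$ commute), even though the six basis points form a simplex --- which is precisely why that example must pass to the new basis $T$ before the M\"obius-pair machinery applies. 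The correct argument, and the one the paper gives, uses the specific hypothesis ``as in Theorem~\ref{thm:1}'': in the coordinates of that construction the Gram matrix of $\pi$ is $A$ from (\ref{eq:A}), whose off-diagonal entries are all $\pm 1\neq 0$, so no two vertices of $\cP$ are conjugate.

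Part~(\ref{thm:2c}) reaches the right conclusion, but the intermediate bookkeeping is off in a way you yourself half-anticipate. The hyperplanes of the simplex $\cQ$ are the polars $\pi(P_j)$ of the vertices of $\cP$, not the polars $\pi(Q_j)$; the latter are the hyperplanes of $\cP$, since $Q_m=\pi(\langle P_k: k\neq m\rangle)$. Moreover, as written your inference is inverted: if the hyperplanes of $\cQ$ really were the $\pi(Q_j)$, then ``$P_i$ lies on exactly one hyperplane of $\cQ$'' would say $P_i$ is conjugate to exactly \emph{one} $Q_j$, not to all but one. The clean version is the paper's: $\pi(P_i)$ is the face of $\cQ$ opposite $Q_i$, hence contains every $Q_j$ with $j\neq i$, and the commutation criterion gives (\ref{thm:2c}) at once; your check that the unique zero in each row of $A$ sits on the diagonal then correctly identifies the exceptional index as $j=i$.
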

\begin{proof}
The assertion in (\ref{thm:2a}) follows from Lemma~\ref{lem:2}. Since all
non-diagonal entries of the matrix $A$ from (\ref{eq:A}) equal to $1$, no two
points which are represented by the elements $x_i$ are conjugate with respect
to $\pi$. Hence (\ref{thm:2b}) is satisfied. Finally, (\ref{thm:2c}) follows,
as the polar hyperplane of the point represented by $x_i$ contains all the
points which are represented by the elements $y_j$ with $j\neq i$. The last
statement holds due to the symmetric role of the two simplices of a M\"{o}bius pair
which was established in the proof of Theorem~\ref{thm:1}~(\ref{thm:1a}).
\end{proof}

\begin{rem}\label{rem:3}
According to Remark~\ref{rem:1} we may obtain nested non-degenerate M\"{o}bius
pairs from appropriate subfamilies of $x_0,x_1,\ldots,x_n$. These M\"{o}bius pairs
satisfy, \emph{mutatis mutandis}, properties (\ref{thm:2b}) and (\ref{thm:2c}).
\end{rem}

\begin{exa}\label{exa:1}
We consider the complex matrices
\begin{equation}\label{eq:pauli}
    \sigma_0:=\begin{pmatrix}1 &
    \hphantom{-}0\\0&\hphantom{-}1\end{pmatrix},\;\;
    \sigma_x:=\begin{pmatrix}0 &
    \hphantom{-}1\\1&\hphantom{-}0\end{pmatrix},\;\;
    \sigma_y:=\begin{pmatrix}0 & -i\\i&\hphantom{-}0\end{pmatrix},\;\;
    \sigma_z:=\begin{pmatrix}1 & \hphantom{-}0 \\0&-1\end{pmatrix}.
\end{equation}
The matrices $i^\alpha\sigma_\beta$ with $\alpha\in\{0,1,2,3\}$ and $\beta \in
\{0,x,y,z\}$ constitute the \emph{Pauli group\/} $\vP$ of order $16$. The
centre of $\vP$ is $Z(\vP)=\big\{ i^\alpha\sigma_0\mid \alpha\in\{0,1,2,3\}
\big\}$. The commutator subgroup $\vP'=\{\pm\sigma_0\}$ and the set
$\vP^{(2)}=\{\pm\sigma_0\}$ of squares are contained in $Z(\vP)$. By
Section~\ref{se:komm}, the factor group $\vP/Z(\vP)$, if written additively, is
a vector space over $\GF(2)$. For each $\beta \in \{0,x,y,z\}$ the coset
$Z(\vP)\sigma_\beta$ is denoted by $\beta$. In this notation, addition can be
carried out according to the relations $0+\beta=\beta$, $\beta+\beta=0$, and
$x+y+z=0$. The mapping
\begin{equation}\label{eq:koord}
    0\mapsto (0,0)^{\Trans},\;\; x\mapsto (1,0)^{\Trans},\;\;
    y\mapsto (0,1)^{\Trans},\;\; z\mapsto (1,1)^{\Trans}
\end{equation}
is an isomorphism of $\vP/Z(\vP)$ onto the additive group of the vector space
$\GF(2)\times \GF(2)$.
\par
Let $\vG$ be the group of order $256$ comprising the three-fold Kronecker
products $i^\alpha\sigma_\beta\otimes\sigma_\gamma\otimes\sigma_\delta$ with
$\alpha\in\{0,1,2,3\}$ and $\beta,\gamma,\delta\in\{0,x,y,z\}$. This group acts
on the eight-dimensional Hilbert space of three qubits. In our terminology from
Section~\ref{se:komm} (with $p:=2$) we have
\begin{equation}\label{}
    Z(\vG) = \big\{ i^\alpha \sigma_0\otimes\sigma_0\otimes\sigma_0 \mid  \alpha\in\{0,1,2,3\}  \big\},
    \;\;\;\;
    \vG'=\vG^{(2)}=\{\pm\sigma_0\otimes\sigma_0\otimes\sigma_0\},
\end{equation}
and $g=-\sigma_0\otimes\sigma_0\otimes\sigma_0$. Hence $V=\vG/Z(\vG)$ is a
six-dimensional vector space over $\GF(2)$ endowed with an alternating bilinear
form $[\cdot,\cdot]_g$. We introduce $\beta\gamma\delta$ as a shorthand for
$Z(\vG)(\sigma_\beta\otimes\sigma_\gamma\otimes\sigma_\delta)$, where
$\beta,\gamma,\delta\in\{0,x,y,z\}$. In this notation, addition in $V$ can be
carried out componentwise according to the relations stated before. An
isomorphism of $V$ onto $\GF(2)^6$ is obtained by replacing the three symbols
of an element of $V$ according to (\ref{eq:koord}). This gives the
\emph{coordinate vector\/} of an element of $V$. For example, the coordinate
vectors of the six elements
\begin{equation}\label{eq:basis}
    x00, y00,0x0, 0y0, 00x,00y\in V
\end{equation}
comprise the standard basis of $\GF(2)^6$. These six elements therefore form a
basis of $V$. The projective space $\PG(5,2)=\bP(V)$, like any projective space
over $\GF(2)$, has the particular property that each of its points is
represented by one and only one non-zero vector of $V$. \emph{We therefore
identify $V\setminus\{000\}$ with $\PG(5,2)$.}
\par
Recall the matrices defined in (\ref{eq:matrizen}). The matrix of the
alternating bilinear form from (\ref{eq:blf}) with respect to the basis
(\ref{eq:basis}) equals to the $6 \times 6$ matrix $ \diag(K,K,K)$ over
$\GF(2)$. In order to obtain a M\"{o}bius pair
\begin{equation}\label{}
    \cP=\{P_0,P_1,\ldots,P_5\},\;\;\;\; \cQ=\{Q_0,Q_1,\ldots,Q_5\}
\end{equation}
we have to use another basis of $V$, e.~g., the one which arises in terms of
coordinates from the six columns of the matrix
\begin{equation}\label{eq:trafo}
    T:=\begin{pmatrix}
    I & J & J\\
    0 & I & J\\
    0 & 0 & I
    \end{pmatrix}.
\end{equation}
Indeed, $T^{\Trans}\cdot\diag(K,K,K)\cdot T$ gives an alternating $6\times 6$
matrix $A$ as in (\ref{eq:A}). We thus can translate our results from
Section~\ref{se:moebius} as follows: First, we multiply $T$ with the ``old''
coordinate vectors from there and, second, we express these ``new'' coordinate
vectors as triplets in terms of $0,x,y,z$. The vertices $P_0,P_1,\ldots,P_5$
and $Q_0,Q_1,\ldots,Q_5$ can be read off, respectively, from the first and
second row of the following matrix:
\begin{equation}\label{eq:PQ5}
    \begin{array}{c@{\;\;}c@{\;\;}c@{\;\;}c@{\;\;}c@{\;\;}c}
        x00&y00&zx0&zy0&zzx&zzy \\
        yzz&xzz&0yz&0xz&00y&00x
    \end{array}
\end{equation}
We note that $\cP$ and $\cQ$ are in perspective from a point according to
Theorem~\ref{thm:1}. This point is $zzz$. Since each line of $\PG(5,2)$ has
only three points, the entries of the second row in (\ref{eq:PQ5}) can be found
by adding $zzz$ to the entries from the first row. Each point of $\PG(5,2)$
corresponds to four elements of the group $\vG$, whence the points of
$\cP\cup\cQ$ correspond to $48$ elements of $\vG$, none of them in the centre
$Z(\vG)$. We can rephrase the M\"{o}bius property as follows: \emph{Let two out of
these $48$ elements of $\vG$ represent distinct points. Then these two elements
commute if, and only if, they represent points which are in distinct rows and
distinct columns of the matrix} (\ref{eq:PQ5}). The $20$ points
$P_{012},P_{013},\ldots,P_{345}$ are obtained by adding three of the points of
$\cP$. Explicitly, we get:
\begin{equation}\label{eq:20}
    \begin{array}{c@{\;\;}c@{\;\;}c@{\;\;}c@{\;\;}c@{\;\;}c@{\;\;}c@{\;\;}c@{\;\;}c@{\;\;}c}
        0x0&0y0&0zx&0zy&xz0&xyx&xyy&xxx&xxy&x0z \\
        yz0&yyx&yyy&yxx&yxy&y0z&z0x&z0y&zxz&zyz
    \end{array}
\end{equation}
We leave it to the reader to find the ${6\choose 4}=15$ nested M\"{o}bius pairs of
tetrahedra which are formed by four points from $\cP$ and the four appropriate
points from (\ref{eq:20}). By Remark~\ref{rem:2}, the $32$ points from
(\ref{eq:PQ5}) and (\ref{eq:20}) are precisely the points off the polar
hyperplane of $zzz$. This means that none of the corresponding elements of
$\vG$ commutes with the representatives of the distinguished point $zzz$.
\end{exa}
The results from \cite{havlicek+o+s-09z} show that Theorem~\ref{thm:2} can be
applied to a wide class of groups, including the generalised Pauli groups
acting on the space of $N$-qudits provided that $d$ is a prime number.

\section{Conclusion}

Following the strategy set up in our recent paper \cite{havlicek+o+s-09z}, we
have got a deeper insight into the geometrical nature of a large class of
finite groups, including many associated with finite Hilbert spaces. This was
made possible by employing the notion of a M\"{o}bius pair of $n$-simplices in a
finite odd-dimensional projective space, PG$(n,p)$, $p$ being a prime.
Restricting to non-degenerate M\"{o}bius pairs linked by a null polarity, we have
first shown their existence for any odd $n$, a remarkable nested structure they
form, and perspectivity from a point of the simplices in any such pair if
$p=2$. Then, the commutation properties of the group elements associated with a
M\"{o}bius pair have been derived. In particular, the two disjoint families of
$n+1$ group elements that correspond to a M\"{o}bius pair are such that any two
distinct elements/operators from the same family do not commute and each
element from one family commutes with all but one of the elements from the
other family. As the theory also encompasses a number of finite generalised
Pauli groups, that associated with three-qubits ($n=5$ and $p=2$) was taken as
an illustrative example, also because of envisaged relevance of M\"{o}bius pairs to
entanglement properties of a system of three fermions with six single-particle
states \cite{levay+vrana-09z}. It should, however, be stressed that
above-outlined theory is based on a particular construction of M\"{o}bius pairs,
and so there remains an interesting challenge to see in which way it can be
generalised to incorporate arbitrary M\"{o}bius pairs.

\subsubsection*{Acknowledgements}

This work was carried out in part within the ``Slovak-Austrian Science and
Technology Cooperation Agreement'' under grants SK 07-2009 (Austrian side) and
SK-AT-0001-08 (Slovak side), being also partially supported by the VEGA grant
agency projects Nos. 2/0092/09 and 2/7012/27. The final version was completed
within the framework of the Cooperation Group ``Finite Projective Ring
Geometries: An Intriguing Emerging Link Between Quantum Information Theory,
Black-Hole Physics, and Chemistry of Coupling'' at the Center for
Interdisciplinary Research (ZiF), University of Bielefeld, Germany.

\footnotesize
%%%%%%%%%%%%%%%%%%%%%% BIBTEX ACTIVE %%%%%%%%%%
%\bibliographystyle{plain} %% {unsrt}
%\bibliography{../../separata/quantum,../../separata/ketten}

%%%%%%%%%%%%%%%%%%%%%% BIBTEX OUTPUT %%%%%%%%%%

%%%%%%%%%%%%%%%%%%%%%%%BIBTEX END   %%%%%%%%%%%

\normalsize
%%%%%%%%%%%%%%%%% authors %%%%%%%%%%%%%%%%%%%%%
\noindent
Hans Havlicek and Boris Odehnal\\
Institut f\"{u}r Diskrete Mathematik und Geometrie\\
Technische Universit\"{a}t\\
Wiedner Hauptstra{\ss}e 8--10/104\\
A-1040 Wien, Austria\\
\texttt{havlicek, boris@geometrie.tuwien.ac.at}\\
\\
\noindent Metod Saniga\\
Astronomical Institute\\
Slovak Academy of Sciences\\
SK-05960 Tatransk\'{a} Lomnica\\
Slovak Republic\\
\texttt{msaniga@astro.sk}
%%%%%%%%%%%%%%%%%%%%%%%%%%%%%%%%%%%%%%%%%%%%%%%%%%

\end{document}